\newcommand{\be}{\begin{equation}}
\newcommand{\ee}{\end{equation}}
\newcommand{\beq}{\begin{eqnarray}}
\newcommand{\eeq}{\end{eqnarray}}
\newtheorem{theorem}{Theorem}
\newtheorem{lemma}[theorem]{Lemma}
\newtheorem{conject}[theorem]{Conjecture}
\newtheorem{obs}[theorem]{Observation}
\begin{document}

\title{All CHSH polytopes}

\author{Stefano Pironio\\
\small Laboratoire d'Information Quantique, Universit\'{e} Libre de Bruxelles (ULB), Bruxelles, Belgium\\
\small Email: \texttt{stefano.pironio@ulb.ac.be}}

\date{February 27, 2014}

\maketitle

\begin{abstract}
The correlations that admit a local hidden-variable model are described by a family of polytopes, whose facets are the Bell inequalities. The CHSH inequality is the simplest such Bell inequality and is a facet of every Bell polytope. We investigate for which Bell polytopes the CHSH inequality is also the unique (non-trivial) facet.  We prove that the CHSH inequality is the unique facet for all bipartite polytopes where at least one party has a binary choice of dichotomic measurements, irrespective of the number of measurement settings and outcomes for the other party. Based on numerical results, we conjecture that it is also the unique facet for all bipartite polytopes involving two measurements per party where at least one measurement is dichotomic. Finally, we remark that these two situations can be the only ones for which the CHSH inequality is the unique facet, i.e., any polytope that does not correspond to one of these two cases necessarily has facets that are not of the CHSH form. As a byproduct of our approach, we derive a new family of facet inequalities.
\end{abstract}

In his seminal article demonstrating the non-locality of quantum theory, John Bell introduced an inequality that must be satisfied by any local hidden-variable model, but which can be violated according to the predictions of quantum theory \cite{bell64}. Fifty years later, such inequalities are still the main tool used to study quantum non-locality and its applications \cite{bellrmp}. The simplest and most widely used such inequality is the famous CHSH inequality \cite{chsh69}. Besides the CHSH inequality, thousands of other Bell inequalities are known\footnote{See \url{http://www.faacets.com/} for an on-line repository of Bell inequalities.}, though their properties have been barely (if at all) investigated in the vast majority of cases. One way to deal with this complex zoo and make progress in our understanding of non-locality is to attempt to identify and focus first on the simplest Bell inequalities beyond the CHSH one. This is the approach that we follow here.

Mathematically, Bell inequalities correspond to the facets of certain polytopes. Specifically, consider an experiment consisting of two separated systems $A$ and $B$ (more generally, one can also consider more than two systems). Let  $P(ab|xy)$ be the joint probability to obtain outcomes $a$ and $b$ when measurements $x$ and $y$ are performed on systems $A$ and $B$, respectively. Let us consider an experimental configuration where only a finite number of distinct measurements can be performed on each system and where each measurement can only yield a finite set of possible outcomes. The experiment is thus described by a finite set $P=\{P(ab|xy)\}$ of $t$ such probabilities, which can be seen as a point $P\in \mathbb{R}^t$. The experiment is said to admit a local model \cite{bellrmp}, if these probabilities can be factorized in the form 
\be \label{loc}
P(ab|xy) = \int\!\mathrm{d}\lambda\, q(\lambda) P_\lambda(a|x)P_\lambda(b|y)\,.
\ee 
The set of $P\in\mathbb{R}^t$ admitting a decomposition of the form (\ref{loc}) is a polytope, that is a compact and convex set in $\mathbb{R}^t$ with a finite number of extreme points, called vertices, and delimited by a finite set of hyperplanes, called facets. See \cite{bellrmp} for more details. 

This polytope depends only on the number of possible measurement settings and outcomes, but not on the way we label these measurement settings and outcomes. Let $m$ denote the number of distinct measurements that can be performed on system $A$, and for each measurement $x$, let $v_x$ denote the number of different outcomes it can yield. Define similarly $n$ as the number of distinct measurements on system $B$ and $w_y$ the number of outcomes for measurement $y$. This general experimental configuration, which we call a Bell scenario, can succinctly be described by the table $[(v_1\ldots v_m),(w_1 \ldots w_n)]$, which fully specifies the number of possible measurement settings and outcomes for $A$ and $B$. Let $\mathcal{L}[(v_1\ldots v_m),(w_1 \ldots w_n)]$ denote the polytope associated to this Bell scenario, i.e., the region specified by (\ref{loc}). When there is no need to specify the Bell scenario, we write simply $\mathcal{L}$.

The facets of the local polytope $\mathcal{L}$ are hyperplanes deliminating the local region from the non-local region, i.e., they correspond to linear inequalities $\sum_{abxy} c_{abxy} P(ab|xy)\leq c$ that are satisfied by joint probabilities $P(ab|xy)$ admitting  a local model, but which could be violated by those that do not admit such a local model -- they are thus the Bell inequalities. For any given Bell scenario, there exists in principle a systematic way to compute all the corresponding facets, but it is very inefficient. We do not know in general all the facets of a given local polytope $\mathcal{L}$.

Nevertheless, we can make some general observation regarding the facial structure of local polytopes. First, since the local polytope only depends on the number of systems, measurement settings, and outcomes, but not on the way they are labeled, we can classify them and their facets in equivalent classes under such relabelings. In the following, when we talk of \emph{a} polytope or \emph{an} inequality, we mean the entire equivalence class under relabeling of systems, measurements settings, and outcomes.  

Second, the positivity conditions $P(ab|xy)\geq 0$ are facets of every Bell polytope $\mathcal{L}$ \cite{pironio05}. Clearly, they are trivial Bell inequalities that can never be violated by any non-local point. In the following, when we refer to a Bell inequality, we thus mean a non-trivial inequality not of the positivity form $P(ab|xy)\geq 0$. We also always assume that Bell scenarios consist of at least two systems, each of which with at least two measurements, and that each measurement has at least two possible results. Indeed, if this is not the case, all facets of the corresponding polytope are either trivial positivity facets or they are equivalent to facets of polytopes corresponding to simpler Bell scenarios \cite{pironio05}. 

The smallest non-trivial local polytope is thus $\mathcal{L}[(2\, 2),(2\, 2)]$. It has a unique (non-trivial) facet, the well-known CHSH inequality \cite{chsh69}
\be 
\sum_{x,y=0}^{1}\sum_{a,b=0}^{1} (-1)^{a+b+xy}P(ab|xy)\leq 2\,.
\ee
The CHSH inequality is also a facet of every (non-trivial) local polytope. Indeed, if an inequality is a facet of a given polytope $L$, it can always be lifted to a polytope $\mathcal{L}'$ with more systems, measurements, or outcomes in such a way that it defines a facet of $\mathcal{L}'$ \cite{pironio05}. For instance, to lift the CHSH inequality to a situation with 3 possible outcomes per measurement, one simply groups the outcomes $1$ and $2$ in an effective ``$1$" outcome. This yields a facet of $\mathcal{L}[(3\, 3),(3\, 3)]$, which is essentially equivalent to the original $2$-outcome CHSH inequality. Using the same procedure, one can lift CHSH to an arbitrary number of outcomes. Similarly, one can introduce ways to lift the CHSH inequality to scenarios with more measurements (in this case simply by ignoring the additional measurements) or more systems.
	
If the CHSH inequality is a facet of every local polytope, then the following question arises: for which local polytopes is it also the unique facet? This is the question that we attempt to solve here. An answer to this question implies that we also know what are the simplest Bell scenarios for which the CHSH inequality is not the unique facet and it thus allow us identifying the simplest Bell inequalities beyond CHSH.

The main result of this article is the following one, which is proven further below. 
\begin{theorem}\label{theo}
All non-trivial facet inequalities of $\mathcal{L}[(2\,2),(w_1\,w_2\ldots w_n)]$ for any $n\geq 2$ and any $w_i\geq 2$ are CHSH inequalities.
\end{theorem}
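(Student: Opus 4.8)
The plan is to exploit the extreme simplicity of $A$'s side: party $A$ has exactly four deterministic strategies $\alpha=(\alpha_0,\alpha_1)\in\{0,1\}^2$, where $\alpha_x$ is the outcome produced by measurement $x$. Every vertex of $\mathcal{L}[(2\,2),(w_1\ldots w_n)]$ is a product of such a strategy with a deterministic strategy $\beta=(\beta_0,\ldots,\beta_{n-1})$ for $B$, and the value of a candidate inequality $\sum_{abxy}c_{abxy}P(ab|xy)\le c$ on this vertex is $V(\alpha,\beta)=\sum_{x,y}c_{\alpha_x\beta_y xy}$. The crucial observation is that this value is \emph{separable} over $B$'s measurements, $V(\alpha,\beta)=\sum_y F_y(\alpha,\beta_y)$ with $F_y(\alpha,b)=c_{\alpha_0 b0y}+c_{\alpha_1 b1y}$, so that for fixed $\alpha$ party $B$ optimizes each measurement independently; moreover $F_y(\alpha,b)$ is an \emph{additive} function of $\alpha_0$ and $\alpha_1$ separately. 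First I would record these facts and pass to the correlator form, which is natural since $A$ is dichotomic.

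The heart of the argument is to show that facetness forces two successive collapses. \textbf{Outcome reduction.} For each measurement $y$ of $B$, I would show that the $w_y$ outcomes may be grouped into at most two classes on which the inequality depends. The subtlety is that, for an \emph{arbitrary} coefficient vector, $B$'s optimal response $\arg\max_b F_y(\alpha,b)$ can run over more than two outcomes as $\alpha$ ranges over its four values; the claim is that such richness makes the set of tight vertices affinely degenerate. Concretely, I would prove that any inequality distinguishing three or more outcome-classes of a single measurement is a non-negative combination of inequalities acting on coarser binarizations together with positivity facets, hence is not facet-defining. This is where the restriction that $A$ can only apply the two-parameter family of functionals $b\mapsto w_b+\alpha_0 u_b+\alpha_1 v_b$ is used decisively. \textbf{Settings reduction.} Once each measurement is effectively dichotomic, the scenario is $\mathcal{L}[(2\,2),(2\ldots 2)]$; in correlator form the coupling of $B_y$ to $A$ is the vector $(M_{0y},M_{1y})$, and $B$'s optimal responses are governed by the four sign patterns $\operatorname{sign}(M_{0y}\alpha_0+M_{1y}\alpha_1+q_y)$. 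A rank count of the resulting tight vertices should show that three or more independently coupled measurements again force affine degeneracy, so that all but two of $B$'s measurements carry removable coefficients.

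After these two collapses, what survives is a facet of the base polytope $\mathcal{L}[(2\,2),(2\,2)]$, which, as recalled above, has the CHSH inequality as its unique non-trivial facet; undoing the binarization and the inclusion of the ignored measurements then exhibits the original inequality as a lifting of CHSH in exactly the two ways described earlier. I expect the main obstacle to be the affine-independence bookkeeping common to both reductions: turning the transparent separable form of $V(\alpha,\beta)$ into the statement that the tight-vertex sets of the unwanted three-class or three-measurement configurations fail to span a hyperplane. The cleanest route is probably to argue the contrapositive, exhibiting an explicit decomposition of any such inequality into CHSH and positivity facets rather than counting dimensions directly, and to carry out the final step via the identity $A_0(B_0+B_1)+A_1(B_0-B_1)\le 2$, which is what singles out the two surviving measurements and their binary structure.
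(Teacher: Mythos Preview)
Your plan follows a genuinely different route from the paper. The paper does not argue via affine-independence of tight vertices at all; instead it uses a Fine-type lemma to replace the local-model condition by the existence of $n$ joint distributions $P(a_1a_2b|y)$ coupled only through a common marginal $P(a_1a_2)$, and then performs an explicit Fourier--Motzkin elimination of the hidden variables $P(a_1a_2b|y)$ (after a cumulative change of variable $S(b|y)=\sum_{b'\le b}P(11b'|y)$). The elimination is shown to close under a simple inductive pattern, and at the last step --- eliminating the single coupling variable $P(11)$ --- the only non-redundant combinations involve two distinct values of $y$ and produce exactly the CH/CHSH liftings. This is constructive: one literally lists all facets.

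Your separability observation $V(\alpha,\beta)=\sum_y F_y(\alpha,\beta_y)$ with $F_y$ additive in $\alpha_0,\alpha_1$ is correct and is morally the same simplification the paper gets from the Fine lemma (the $n$ subsystems decouple except through Alice's joint). However, the two ``reductions'' you announce are precisely the content of the theorem, and in the proposal they remain assertions. For the \emph{outcome reduction}, you claim that any inequality distinguishing three or more outcome classes for some $y$ is a non-negative combination of binarized inequalities and positivity facets; this is plausible given that $b\mapsto F_y(\cdot,b)$ lives in a three-dimensional affine space, but turning that into a decomposition (or into a proof that the tight-vertex set is affinely degenerate) is the actual work, and nothing in the plan indicates how to do it. For the \emph{settings reduction}, you are essentially re-proving the known $w_i=2$ case (\'Sliwa; Collins--Gisin), and ``a rank count should show\ldots'' is not yet an argument --- one must exhibit, for every putative facet with three genuinely coupled $B_y$'s, either a conic decomposition or a dimension drop, and the sign-pattern description you give does not by itself deliver this.

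In short: the strategy is reasonable and, if completed, would give a more structural proof than the paper's elimination; but as written, the proposal identifies the obstacles without overcoming them. If you pursue it, the cleanest realization of your outcome reduction is probably to note that, for fixed $y$, the map $b\mapsto\bigl(F_y((0,0),b),F_y((1,0),b),F_y((0,1),b)\bigr)$ determines $F_y((1,1),b)$ and hence lies in $\mathbb{R}^3$, so after translating by a positivity term at most two upper-envelope pieces can be active on $\{0,1\}^2$; making this precise and compatible with facetness is where the argument must be filled in.
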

For the special case $w_i=2$, this result was already obtained in \cite{sli03,CG}. It shows that as long as one party has only a choice between two dichotomic measurements then they are no other Bell inequalities than CHSH, irrespective of the number of measurements and outcomes for the other party.

We also propose the following conjecture.
\begin{conject}
All non-trivial facet inequalities of $\mathcal{L}[(2\,v_{2}),(w_{1}\,w_{2})]$ for any $v_2,w_1,w_2\geq 2$ are CHSH inequalities.
\end{conject}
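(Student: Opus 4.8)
The plan is to reduce the claim to a combinatorial statement about the deterministic vertices of $\mathcal{L}[(2\,v_2),(w_1\,w_2)]$ and then argue by induction on $v_2$, taking Theorem~\ref{theo} (the case $v_2=2$, in which $A$ has two dichotomic measurements and $B$ has $n=2$ measurements) as the base case. Recall that every vertex of the local polytope is deterministic, fixed by outputs $a_1\in\{0,1\}$ and $a_2\in\{0,\ldots,v_2-1\}$ for $A$'s two measurements and $b_1\in\{0,\ldots,w_1-1\}$, $b_2\in\{0,\ldots,w_2-1\}$ for $B$'s two measurements, and that a candidate inequality with coefficients $c_{abxy}$ evaluated on such a vertex equals the separable sum $C(a_1,a_2,b_1,b_2)=\alpha(a_1,b_1)+\beta(a_1,b_2)+\gamma(a_2,b_1)+\delta(a_2,b_2)$, where $\alpha,\beta,\gamma,\delta$ collect the coefficients of the four setting pairs. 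Validity with bound $c$ is exactly $C\le c$ everywhere, and being a facet means that the vertices saturating $C=c$ affinely span a hyperplane of the full-dimensional polytope. The essential feature to exploit — and the reason the conjecture restricts $B$ to two measurements — is that, because $b_1$ and $b_2$ are chosen independently, the maximization over $B$ factorizes: $\max_{b_1,b_2}C=M_1(a_1,a_2)+M_2(a_1,a_2)$ with $M_1(a_1,a_2)=\max_{b_1}[\alpha(a_1,b_1)+\gamma(a_2,b_1)]$ and $M_2(a_1,a_2)=\max_{b_2}[\beta(a_1,b_2)+\delta(a_2,b_2)]$, so that $c=\max_{a_1,a_2}[M_1+M_2]$ is controlled by a problem living on the $2\times v_2$ grid of $A$'s outputs alone.

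The inductive step proceeds through the following dichotomy for a given non-trivial facet. \emph{Either} two outcomes $k,k'$ of $A$'s second measurement are \emph{redundant}, meaning their coefficient rows coincide up to the no-signaling gauge freedom, i.e.\ $\gamma(k,\cdot)-\gamma(k',\cdot)$ is constant in $b_1$ and $\delta(k,\cdot)-\delta(k',\cdot)$ is constant in $b_2$; in that case $k$ and $k'$ can be merged without changing the facet, exhibiting the inequality as a lifting (in the sense of \cite{pironio05}) of a facet of $\mathcal{L}[(2\,(v_2-1)),(w_1\,w_2)]$, which is CHSH by the induction hypothesis, so the lifted inequality is CHSH as well. \emph{Or} no two outcomes are redundant, in which case the second measurement of $A$ is \emph{genuinely} $v_2$-valued and one must show, for $v_2\ge 3$, that no such facet exists.

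For the genuine case I would argue by dimension counting on the saturating vertices, using the factorized bound $c=\max_{a_1,a_2}[M_1+M_2]$. A vertex $(a_1,a_2,b_1,b_2)$ is tight iff $(a_1,a_2)$ attains the outer maximum and $b_1,b_2$ attain the two inner maxima; hence, for each $A$-optimal pair $(a_1,a_2)$, the tight vertices above it form the product of the argmax sets of the two inner maximizations. The plan is to show that when every pair of outcomes of $A$'s second measurement is non-redundant, the $A$-optimal set of pairs is forced to be so small — essentially confined to a sub-grid that never exposes three distinct values of $a_2$ at once — that this product tight set cannot affinely span the required hyperplane, whose dimension is $d-1$ with $d=v_2+(w_1+w_2-2)+v_2(w_1+w_2-2)$. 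Passing to the Collins--Gisin gauge and computing the rank of the incidence structure of tight vertices should then yield a contradiction unless $\gamma,\delta$ collapse onto a two-valued pattern, i.e.\ unless the inequality was a CHSH lifting after all.

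The main obstacle, and the reason this remains a conjecture rather than a theorem, is precisely the genuine case with $v_2,w_1,w_2$ all large simultaneously. The dichotomic first measurement of $A$ and the two-measurement restriction on $B$ both tighten the structure, but I do not see a single clean invariant of the coefficient tables $\alpha,\beta,\gamma,\delta$ that controls the rank of the tight-vertex incidence matrix for arbitrary outcome numbers; the factorization $c=M_1+M_2$ reduces the $B$ side to two independent maximizations, yet the interaction between these maxima and the $2\times v_2$ structure on the $A$ side is exactly what resists a uniform bound. A realistic route is therefore to first settle the small genuine cases ($v_2=3$, then arbitrary $w_1,w_2$) in order to isolate the obstruction — consistent with the numerical evidence reported here — before attempting the inductive argument in full generality.
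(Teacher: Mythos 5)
The statement you are addressing is stated in the paper as a \emph{conjecture}: the paper gives no proof of it, only computational verification (via the software \cite{porta}) that it holds for all $2\leq v_2,w_1,w_2\leq 5$, plus the heuristic observation that the only non-local vertices of the corresponding no-signalling polytope are PR-boxes. Your proposal, by your own admission, does not close this gap either, so there is nothing here that upgrades the conjecture to a theorem. The sound parts of your sketch are the vertex decomposition $C=\alpha+\beta+\gamma+\delta$, the factorized bound $c=\max_{a_1,a_2}[M_1+M_2]$ (which correctly isolates why $n=2$ on Bob's side is special), and the dimension count $d=v_2+(w_1+w_2-2)+v_2(w_1+w_2-2)$, which matches the formula of \cite{pironio05}. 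But the entire content of the conjecture is concentrated in your ``genuine case,'' where you assert without argument that non-redundancy of all pairs of outcomes of $A$'s second measurement forces the $A$-optimal set to ``never expose three distinct values of $a_2$ at once,'' so that the tight vertices cannot span the hyperplane. No mechanism is offered for this, and it is essentially a restatement of the conjecture: a facet exposing three or more values of $a_2$ in a non-CHSH pattern is precisely the object whose nonexistence must be proved, and the rank computation of the tight-vertex incidence structure that you defer is where all the difficulty lives. Note that the paper's own proof of Theorem~1 (the case it \emph{can} handle) does not proceed by such vertex/rank arguments at all, but by a complete Fourier--Motzkin elimination made tractable through the Fine-type Lemma~\ref{fine}; the obstruction to extending that method to $v_2\geq 3$ is that Alice's side is no longer binary, and your proposal does not supply an alternative that overcomes this.

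Two further technical points in the ``redundant'' branch need repair even as a conditional argument. First, your gauge condition should require not merely that $\gamma(k,\cdot)-\gamma(k',\cdot)\equiv c_1$ and $\delta(k,\cdot)-\delta(k',\cdot)\equiv c_2$ are constants, but that $c_1+c_2=0$; otherwise the two outcomes yield different values of $C$ on vertices and merging them changes the inequality. Second, the merging step relies on a \emph{converse} lifting lemma: that a facet of $\mathcal{L}[(2\,v_2),(w_1\,w_2)]$ whose coefficients collapse across two outcomes is the lifting of a \emph{facet} (not merely a valid inequality) of $\mathcal{L}[(2\,(v_2-1)),(w_1\,w_2)]$. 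The lifting results of \cite{pironio05} go in the forward direction (liftings of facets are facets); the converse is plausible via projection of the saturating set, but it must be proved, since a priori the merged inequality could be valid yet supported on too small a face of the smaller polytope. As it stands, your proposal is a reasonable research programme --- induction on $v_2$ anchored at Theorem~\ref{theo}, with the $v_2=3$ genuine case as the first target, consistent with the numerical evidence --- but not a proof, and it should not be presented as establishing the conjecture.
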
 
Note first that this conjecture is true for the particular case where $v_2=2$, since it then follows from Theorem~\ref{theo}. To provide evidence for this conjecture for more general cases, we have checked using polytope software \cite{porta}, that it is indeed satisfied for all $2\leq v_2,w_1,w_2 \leq 5$. Another (more intuitive) argument for this conjecture is that the unique non-local vertices of the no-signalling polytope associated with the Bell scenario $[(2\,v_{2}),(w_{1}\,w_{2})]$ are the the PR-boxes, which maximally violate the CHSH inequality \cite{boxes}.

Finally, we make the following observation.
\begin{obs}
All non-trivial Bell scenarios that are not of the form $[(2\,2),(w_1\,w_2\ldots w_n)]$ or $[(2\,v_{2}),(w_{1}\,w_{2})]$ have facet inequalities that are inequivalent to the CHSH inequality.
\end{obs}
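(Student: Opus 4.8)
\section*{Proof proposal for the Observation}

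The plan is to prove the contrapositive by exhibiting, in every Bell scenario that is \emph{not} of one of the two admissible forms, an explicit facet inequivalent to CHSH, and then to propagate it to all larger scenarios through the lifting property quoted above. The starting point is to note that a scenario $[(v_1\ldots v_m),(w_1\ldots w_n)]$ fails to be of the form $[(2\,2),(w_1\ldots w_n)]$ or $[(2\,v_2),(w_1\,w_2)]$ in exactly three mutually exhaustive situations (up to exchanging $A$ and $B$): (I) both parties have at least three measurements, $m,n\geq 3$; (II) one party has exactly two measurements of which at least one is non-dichotomic while the other party has at least three measurements, say $m=2$ with $\max(v_1,v_2)\geq 3$ and $n\geq 3$; and (III) $m=n=2$ with all four measurements non-dichotomic, $v_1,v_2,w_1,w_2\geq 3$. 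A short case check over the possible values of $(m,n)$ and over which measurements are dichotomic confirms that these three families are precisely the complement of the two admissible forms.

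First I would reduce each family to a single minimal representative. Grouping outcomes (as in the lifting of CHSH to more outcomes described above) and ignoring surplus measurements are exactly the inverse operations of lifting, so any facet of the minimal scenario lifts to a facet of every scenario sitting above it. For family (I) the minimal representative is $\mathcal{L}[(2\,2\,2),(2\,2\,2)]$, which has the well-known $I_{3322}$ inequality as a non-CHSH facet; every scenario with $m,n\geq 3$ lies above it by selecting three measurements per party and collapsing the outcomes down to two. For family (III) the minimal representative is $\mathcal{L}[(3\,3),(3\,3)]$, which has the CGLMP inequality as a non-CHSH facet, and every scenario of this family lies above it by collapsing each measurement down to three outcomes. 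For family (II) the minimal representative is $\mathcal{L}[(2\,3),(2\,2\,2)]$, for which I would invoke one of the new facet inequalities produced as a byproduct of the proof of Theorem~\ref{theo}; every scenario of family (II) lies above it by collapsing $A$'s two measurements to the outcome pattern $(2,3)$ and keeping three dichotomic measurements of $B$.

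To close the argument two points must be verified. The routine one is that a lifting of a non-CHSH facet is again non-CHSH: since a lifted inequality is recovered, by restriction to the original settings and by the inverse outcome-grouping, as the inequality one started from, it can be equivalent to (a lifting of) CHSH only if the original inequality already was, which by construction it is not. The genuinely substantive point, and the main obstacle, is the case (II) witness: one must actually produce a facet of $\mathcal{L}[(2\,3),(2\,2\,2)]$ that is inequivalent to CHSH. This is where the new family of facet inequalities is needed, and checking that a member of this family is (i) valid on the local polytope, (ii) facet-defining, and (iii) not a relabelling of a lifted CHSH inequality is the technical heart of the Observation. It is worth remarking that such a facet must indeed be genuinely new rather than lifted from a smaller scenario, since every proper sub-scenario of $\mathcal{L}[(2\,3),(2\,2\,2)]$ obtained by dropping a measurement or collapsing $A$'s ternary outcome is of the form $[(2\,2),(\cdots)]$ or $[(2\,3),(2\,2)]$; the $I_{3322}$ and CGLMP witnesses for families (I) and (III) are, by contrast, already available in the literature.
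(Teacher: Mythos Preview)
Your overall strategy---reduce to minimal scenarios and invoke known non-CHSH facets, then lift---is exactly the paper's approach. However, your case analysis has a genuine gap: you implicitly restrict to \emph{bipartite} scenarios. The Observation covers ``all non-trivial Bell scenarios'', and the paper explicitly allows more than two systems. Any scenario with three or more parties is trivially not of the form $[(2\,2),(w_1\ldots w_n)]$ or $[(2\,v_2),(w_1\,w_2)]$, yet none of your families (I)--(III) covers it. The paper accordingly lists a \emph{fourth} minimal case, the tripartite scenario $[(2\,2),(2\,2),(2\,2)]$, which is known to have $44$ non-CHSH facets; every $N$-partite scenario with $N\geq 3$ lies above it and inherits a non-CHSH facet by lifting. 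Without this case your ``mutually exhaustive'' claim is false.

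Two smaller points. First, the non-CHSH facet of $\mathcal{L}[(2\,3),(2\,2\,2)]$ is not obtained as a byproduct of the proof of Theorem~\ref{theo}; in the paper it is found by direct computer enumeration of the facets (the generalisation to $\mathcal{L}[(2\,n),(2\ldots 2)]$ is then proved separately as Theorem~4). Second, your observation that the $[(2\,3),(2\,2\,2)]$ witness must be genuinely new---because every proper sub-scenario falls back into one of the two CHSH-only forms---is a nice remark that the paper does not spell out.
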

By the lifting property of facet inequalities, it is sufficient to verify this statement for the simplest Bell scenarios not of the above two forms. There are four possible cases.
\begin{itemize}
\item $[(2\, 2),(2\,2),(2\,2)]$. Since the above Bell scenarios involve two systems, one first possibility is to consider a tripartite Bell scenario. The simplest one is $[(2\, 2),(2\,2),(2\,2)]$, which is known to contain $44$ non-trivial facets in addition to the CHSH inequality \cite{sli03}.
\item $[(2\,2\,2),(2\,2\,2)]$. Going back to bipartite Bell scenarios, the next possibility is to consider a scenario with three measurement choices for each system, since the above two Bell scenarios have always at least a binary choice of measurements for one system. The simplest such Bell scenario is $[(2\,2\,2),(2\,2\,2)]$, which contains the Froissart inequality in addition to the CHSH inequality \cite{froissard81}.
\item $[(3\,3),(3\,3)]$. If we restrict to binary choices of measurements on each system, then they must all have at least $3$ outcomes. The simplest such scenario is $[(3\,3),(3\,3)]$, which contains the CGLMP inequality \cite{cglmp}.
\item $[(2\,3),(2\,2\,2)]$. Finally, the last possibility is to have a binary choice of measurements in only one system, in which case one of these measurements must have at least $3$ outcomes. The simplest such Bell scenario is $[(2\,3),(2\,2\,2)]$, whose facial structure had not been determined before. We have determined all the facets of this polytope using the polytope software \texttt{cdd} \cite{cdd}. It contains the following facet inequality which is not of the CHSH form
\be\label{newineq}
\begin{split}
P_A(1|1)+P_B(1|1)+P_B(1|2)&-P(11|11)-P(11|12)-P(11|21)\\
&-P(21|22)
-P(11|13)+P(11|23)+P(21|23)\geq 0\,,
\end{split}
\ee
\end{itemize}
where the possible values of $a,b,x,y$ are labeled as $1,2,3$ and we have written $P_A(a|x)$ for the marginal probabilities on system $A$ and similarly $P_B(b|y)$ for the marginal probabilities on system $B$ (these marginals are well-defined thanks to the no-signalling principle \cite{bellrmp}).

In summary, if Conjecture~2 is true, the simplest Bell inequalities beyond CHSH are the CGLMP inequality, Froissard's inequality, the inequalities associated with the tripartite situation, and the new inequality (\ref{newineq}). Since any lifting of any one of these facet inequalities to a more complex Bell scenario is also a facet of the corresponding polytope, all Bell scenarios that are not of the form $[(2\,2),(w_1\,w_2\ldots w_n)]$ or $[(2\,v_{2}),(w_{1}\,w_{2})]$ must contain at least one of these four types of facets. In principle, one could then also ask the question that we asked for the CHSH inequality: for which Bell scenarios are such facets the only facets?

We note that based on an early draft of the present paper, some properties of the new inequality (\ref{newineq}) have been already investigated in \cite{dimwit}. In particular, the maximal quantum violation of the above inequality is $-0.2532$, which is obtained by measuring a partially entangled state of two qutrits. The maximal quantum violation with qubits, however, is $-0.2071$, i.e., the inequality (\ref{newineq}) can serve as a dimension witness for qutrits.

Finally, we conclude by generalizing this inequality to the class of local polytopes $\mathcal{L}_n=\mathcal{L}[(2\,n),(2\ldots 2)]$ with $n$ binary measurements on system $B$ in the following way
\be\label{ineq2}
\begin{split}
P_A(1|1)+\sum_{k=1}^{n-1}P_B(1|k)&-\sum_{k=1}^{n}P(11|1k)-\sum_{k=1}^{n-1}P(k1|2k)+\sum_{k=1}^{n-1}P(k1|2n)\geq 0\,,
\end{split}
\ee
where the possible values of $a,b,x,y$ are labeled as $1,2,\ldots,n$. For $n=2$, we recover the CHSH inequality, and for $n=3$ the inequality (\ref{newineq}). More generally, we have the following result, whose proof is presented below the one of Theorem~1.
\begin{theorem}
The inequality (\ref{ineq2}) is a facet of $\mathcal{L}_n$ for any $n\geq 2$. 
\end{theorem}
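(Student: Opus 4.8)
The plan is to work throughout in the Collins--Gisin (marginal) representation, in which $\mathcal{L}_n$ is full-dimensional and its vertices are the deterministic local strategies. Such a strategy is a tuple $(\alpha_1,\alpha_2,\beta_1,\dots,\beta_n)$ with $\alpha_1\in\{1,2\}$, $\alpha_2\in\{1,\dots,n\}$ and $\beta_y\in\{1,2\}$, and it maps to the point with $P_A(1|1)=[\alpha_1=1]$, $P_A(a|2)=[\alpha_2=a]$, $P_B(1|y)=[\beta_y=1]$, $P(11|1y)=[\alpha_1=1][\beta_y=1]$ and $P(a1|2y)=[\alpha_2=a][\beta_y=1]$. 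A short count gives $\dim\mathcal{L}_n = \sum_x(v_x-1)+\sum_y(w_y-1)+\sum_{x,y}(v_x-1)(w_y-1) = n+n+n^2 = n^2+2n =: d$. Since (\ref{ineq2}) is valid and saturated by some vertices, it suffices to show the face it defines has dimension $d-1$; I will do this in dual form, by proving that the only affine functional vanishing on every saturating vertex is a scalar multiple of the left-hand side of (\ref{ineq2}) (equivalently, one could exhibit $d$ affinely independent saturating vertices).

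First I would establish validity together with an explicit list of the saturating vertices. Substituting a generic strategy into the left-hand side of (\ref{ineq2}), the expression simplifies and, after a two-way case split on whether $\alpha_1=1$ and whether $\alpha_2=n$, reduces in every case to a manifestly nonnegative sum of $0/1$ quantities; this proves the inequality on all vertices, hence on $\mathcal{L}_n$. Reading off the equality conditions in each of the four cases yields four explicit families of saturating strategies (for example $\alpha_1=1$, $\alpha_2=j\le n-1$, $\beta_j=1$ with the remaining $\beta$'s free), which furnishes the raw material for the next step.

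The core of the argument is a coefficient-elimination cascade. Writing a general affine functional $F = c_0 + c\,P_A(1|1) + \sum_a f_a P_A(a|2) + \sum_y g_y P_B(1|y) + \sum_y h_y P(11|1y) + \sum_{a,y} m_{ay} P(a1|2y)$ and demanding $F=0$ on suitable saturating vertices, I would fix the coefficients in turn: the all-``losing'' strategy (mapping to the origin) forces $c_0=0$, and the strategies with $\alpha_2=j$ and all other outcomes equal to $2$ force $f_a=0$. Toggling a single $\beta_y$ on or off inside a fixed family moves one marginal $P_B(1|y)$ and at most one coordinate in each joint block while preserving saturation, and this pins down the $g_y$, $h_y$ and $m_{ay}$ one after another in terms of a single surviving scale. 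Collecting the relations shows $F$ equals that scale times the left-hand side of (\ref{ineq2}), so the space of such functionals is one-dimensional and (\ref{ineq2}) is a facet.

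The main obstacle is not any single identity but the systematic bookkeeping for general $n$: one must organize the four saturating families so that, for each of the $d$ coefficients, there is a pair of saturating vertices differing in exactly one controlled coordinate, and verify that every vertex invoked genuinely lies on the face. The recursive structure helps here --- setting $n=2$ recovers the CHSH inequality, already known to be a facet, which both anchors the base case and provides a consistency check on the elimination scheme.
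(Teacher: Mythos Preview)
Your proposal is correct in outline and would yield a valid proof if carried out. The validity argument is essentially the same as the paper's (case split on the deterministic value of $\alpha_1$; the paper does not further split on $\alpha_2=n$ but this is a minor reorganization). For the facet step you take the dual route: instead of exhibiting $d=n(n+2)$ affinely independent saturating vertices, you argue that the space of affine functionals vanishing on the saturating face is one-dimensional, via a coefficient-elimination cascade.

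The paper's proof is the primal counterpart of yours. It simply writes down four explicit families of saturating vertices---$(1)$ $\alpha_1=1$, $\alpha_2=k$, all $\beta_l=1$; $(2)$ $\alpha_1=1$, $\alpha_2=k$, exactly one $\beta_l=2$ with $l\neq k$; $(3)$ $\alpha_1=2$, $\alpha_2=k$, all $\beta_l=2$; $(4)$ $\alpha_1=2$, $\alpha_2=k$, $\beta_k=1$ and all other $\beta_l=2$---totalling $n+n(n-1)+n+n=n(n+2)$ points, and then observes that adding them in this order is triangular: each new vertex has a nonzero Collins--Gisin coordinate ($P(k2|2l)$, then $P(k2|2k)$, then $P(21|1k)$) that is zero for every vertex introduced earlier. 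This gives affine independence in one line, with no bookkeeping of pairs of vertices or coefficient equations. Your approach buys a more systematic explanation of \emph{why} the functional is determined up to scale, but at the cost of the ``systematic bookkeeping'' you yourself flag; the paper's triangular argument sidesteps that entirely and is shorter. Note also that the saturating vertices you sample from (the origin, the $\alpha_1=2$ points with all $\beta_y=2$, and the family $\alpha_1=1$, $\alpha_2=j\le n-1$, $\beta_j=1$ with the other $\beta$'s free) are exactly the paper's families $(3)$ and $(1)$--$(2)$, so the two arguments are drawing on the same pool of vertices.
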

We leave open the question of determining the properties of the inequalities (\ref{ineq2}), such as their quantum violations.

\section*{Proof of Theorem 1}
A standard procedure to solve facet enumeration problems is the Fourier-Motzkin elimination method \cite{Schrijver89}. It is this approach that we will use to construct all the inequalities of $\mathcal{L}[(2\,2),\linebreak[4](w_1\,w_2\ldots w_n)]$. Let us first remind that a set of probabilities $P=\{P(ab|xy)\}$ admits a local model (\ref{loc}) if and only if there exists a joint distribution $P(a_1\ldots a_m b_1\ldots b_n)$ that is positive $P(a_1\ldots a_m b_1\ldots b_n)\geq 0$ and normalized $\sum_{a_1,\ldots, a_m, b_1,\ldots,b_n}P(a_1\ldots a_m b_1\ldots b_n)=1$ and which  returns the probabilities $P(ab|xy)$ as marginals \cite{bellrmp}:
\be \label{locjoint}
P(ab|xy)=\sum_{a_1}\cdots\sum_{a_m}\sum_{b_1}\cdots\sum_{b_n} P(a_1\ldots a_mb_1\ldots b_n)\, \delta_{a_x,a}\,\delta_{b_y,b}\,.
\ee
In quantum mechanics, such joint distributions are ill-defined for incompatible (non-commuting) measurements, hence the origin of the contradiction with local models. Fine further noticed that in the case of $\mathcal{L}[(2\,2),(2\,2)]$ the existence of a complete joint distributions for only \emph{one} of the two systems is already equivalent to the existence of a local model \cite{Fine82}. The following lemma extends Fine's results to more measurements and outcomes.
\begin{lemma}\label{fine} Their exists a joint distribution $P(a_1\ldots a_m b_1\ldots b_n)$  satisfying (\ref{locjoint}) if and only if there exists $n$ probability distributions $P(a_1\ldots a_mb|y)$, one for each $y=1,\dots,n$, with the following two properties:
\begin{enumerate}\renewcommand{\labelenumi}{\roman{enumi})}
\item they return
the original correlations as marginals:
\begin{equation}\label{theojointeq1}
P(ab|xy)=\sum_{a_1}\cdots\sum_{a_m}P(a_1\ldots a_m b|y)\, \delta_{a_x,a}\,.
\end{equation}
\item they yield one and the same joint distribution $P(a_1\ldots a_m)$ on Alice's side:
\begin{equation}\label{theojointeq2}
\sum_{b}P(a_1\ldots a_m b|y)=P(a_1\ldots a_m).
\end{equation}
\end{enumerate}
\end{lemma}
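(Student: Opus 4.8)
The plan is to prove the two directions of the equivalence separately. The forward direction (existence of the full joint distribution implies existence of the $n$ distributions) should be routine: given a joint distribution $P(a_1\ldots a_m b_1\ldots b_n)$ satisfying (\ref{locjoint}), I would define, for each $y$, the candidate distribution by marginalizing over all of Bob's variables while retaining the value of $b_y$,
\[
P(a_1\ldots a_m b|y):=\sum_{b_1}\cdots\sum_{b_n} P(a_1\ldots a_m b_1\ldots b_n)\,\delta_{b_y,b}\,.
\]
Property i), eq.~(\ref{theojointeq1}), then follows by substituting this definition and recognizing the right-hand side of (\ref{locjoint}); property ii), eq.~(\ref{theojointeq2}), follows by summing over $b$, which removes the factor $\delta_{b_y,b}$ and leaves the Alice marginal $P(a_1\ldots a_m)=\sum_{b_1\ldots b_n}P(a_1\ldots a_m b_1\ldots b_n)$, manifestly independent of $y$.

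The substance of the lemma is the converse, the genuine extension of Fine's argument. Here the key observation is that property ii) guarantees that all $n$ distributions share one and the same Alice marginal $P(a_1\ldots a_m)$, so each of them can be written as $P(a_1\ldots a_m b|y)=P(a_1\ldots a_m)\,P(b\mid a_1\ldots a_m,y)$ in terms of a conditional distribution $P(b\mid a_1\ldots a_m,y)$ for Bob's $y$-th outcome given Alice's full assignment (defined arbitrarily, say uniformly, on the set where $P(a_1\ldots a_m)=0$, where the prefactor kills the term anyway). The construction I would then use is to glue Bob's variables together by making them \emph{conditionally independent} given Alice's assignment,
\[
P(a_1\ldots a_m b_1\ldots b_n):=P(a_1\ldots a_m)\prod_{y=1}^{n}P(b_y\mid a_1\ldots a_m,y)\,.
\]

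With this candidate in hand the verification splits into three checks. Positivity is immediate, since every factor is nonnegative. Normalization follows because each sum $\sum_{b_y}P(b_y\mid a_1\ldots a_m,y)$ equals $1$ (a consequence of property ii)), collapsing the product and leaving $\sum_{a_1\ldots a_m}P(a_1\ldots a_m)=1$. For the marginal condition (\ref{locjoint}) I would insert the candidate into its right-hand side: summing over every $b_{y'}$ with $y'\neq y$ contributes a factor $1$ each, the factor $\delta_{b_y,b}$ fixes the remaining Bob term to $P(b\mid a_1\ldots a_m,y)$, and recombining with $P(a_1\ldots a_m)$ reproduces $P(a_1\ldots a_m b|y)$, whence property i) yields exactly $P(ab|xy)$.

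The conceptual crux, and the step I expect to require the most care to state cleanly, is justifying \emph{why} conditional independence of Bob's variables is legitimate. The point is that the observable data $P(ab|xy)$ only ever couple a single Alice measurement to a single Bob measurement, so the joint distribution is never required to constrain the correlations between two distinct Bob outcomes $b_y$ and $b_{y'}$; one is therefore free to impose the simplest admissible correlation, namely independence conditioned on Alice's assignment. Property ii) is precisely what makes this gluing consistent, since without a common Alice marginal the product above could not be normalized coherently and the construction would collapse. This asymmetric treatment of Alice (a full joint distribution) and Bob (conditionally independent variables) is the heart of Fine's original two-measurement argument, and the remaining work is essentially bookkeeping: confirming that it survives verbatim when Alice has $m$ measurements with arbitrary numbers of outcomes and Bob has $n$.
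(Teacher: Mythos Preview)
Your proof is correct and coincides with the paper's: the forward direction is the same marginalization, and your conditional-independence construction $P(a_1\ldots a_m)\prod_y P(b_y\mid a_1\ldots a_m,y)$ is precisely the paper's formula $\prod_y P(a_1\ldots a_m b_y|y)\big/P(a_1\ldots a_m)^{n-1}$ (with $0$ when $P(a_1\ldots a_m)=0$), just rewritten. Your phrasing in terms of conditional independence makes the intuition explicit, but the underlying argument is identical.
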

\begin{proof}
The necessary condition is straightforward, just observe that the $n$ distributions of the lemma can be obtained from $P(a_1\ldots a_m b_1\ldots b_n)$ as marginals:
\begin{equation}
P(a_1\ldots a_m b|y)=\sum_{y'}\sum_{b_{y'}}P(a_1\ldots a_m b_1\ldots b_n)\delta_{b,b_y}\,.
\end{equation}
To show sufficiency, set
\begin{equation}
P(a_1\ldots a_m b_1\ldots b_n)=\left\{\begin{array}{cl}\displaystyle\frac{\prod\limits_yP(a_1\ldots a_m b_y|y)}{{\left(P(a_1\ldots a_m)\right)}^{m-1}}&\mbox{if }P(a_1\ldots a_m)\neq 0\vspace{0.5em}\\0&\mbox{if }P(a_1\ldots a_m)=0.\end{array}\right.
\end{equation}
It is straightforward to verify, using \eqref{theojointeq1} and \eqref{theojointeq2}, that the resulting joint distribution is positive and normalized and that it satisfies (\ref{locjoint}).
\end{proof}
Instead of considering the original condition (\ref{locjoint}) which involves the $\left(\prod_{x=1}^{m}v_x\right)\left(\prod_{y=1}^{n}w_y\right)$ probabilities $P(a_1\ldots a_m b_1\ldots b_n)$, we can thus instead consider the conditions (\ref{theojointeq1}) and (\ref{theojointeq2}) which involve only $\left(\prod_{x=1}^{m}v_x\right)\left(\sum_{y=1}^{n}w_y\right)$ probabilities $P(a_1\ldots a_m b|y)$, an exponential decrease in the number of unknowns. This simplification will allow us to prove Theorem~1.

Let us now focus specifically on the case  $a,x\in\{1,2\}$, $b\in\{1,\ldots,w_y\}$, $y\in\{1,\ldots,n\}$. By the above lemma, the set of $P(ab|xy)$ that admit a local model is the set defined by the following linear system
\begin{align}
P(ab|1y)=&\sum_{\phantom{a_1,}a_2\phantom{,b}} P(aa_2b|y)\label{eq1}\\
P(ab|2y)=&\sum_{\phantom{a_2,}a_1\phantom{,b}} P(a_1ab|y)\label{eq2}\\
 &\phantom{\sum_{a_1,a_2,b}}P(a_1a_2b|y)\geq 0\\
 &\sum_{a_1,a_2,b}P(a_1a_2b|y)=1\label{eq3}\\
 &\sum_{\phantom{a_1,}b\phantom{,a_2}}P(a_1a_2b|y)=P(a_1a_2)\label{coupling}\,,
\end{align}
with unknowns $P(a_1a_2b|y)$
This is thus an ensemble of $n$ linear inequality systems, one for each $y$, coupled through Eq. (\ref{coupling}).
The Fourier-Motzkin elimination method consists of successively eliminating each unknown $P(a_1a_2b|y)$, until one is left only with linear conditions on the $P(ab|xy)$, which are thus the Bell inequalities. It is analogous to the way systems of linear equalities are solved by the Gaussian method. Contrary to Gaussian elimination, Fourier-Motzkin elimination, however, is not efficient in general. In the above specific case, it can nevertheless be carried out to obtain the complete set of Bell inequalities.

The first step in the Fourier-Motzkin elimination is to solve the equality constraints in the above equations, i.e., express a subset of the variables as linear combination of the remaining variables. Formally, the above set of equality constraints is equivalent to the set of equalities defining the no-signalling tripartite polytope, where we identify $P(a_1a_2b|y)$ with an usual distribution $P(a_1a_2b|12y)$ of a tripartite Bell scenario. It is well-known that a minimal representation of the no-signalling polytope that does not involve any linear constraints between the remaining variables is obtained by considering all probabilities (including the one-partite and two-partite marginals) for which one of the outcomes, say the last one, does not appear. It follows that we can choose as a set of independent variables $\{P_A(1|1)$, $P_A(1|2)$, $P_B(b|y)$, $P(1b|1y), P(1b|2y)$, $P(11)$, $P(11b|y)\}$ where $b<w_y$ and where $P_A(1|1)=\sum_{b=1}^{w_y}P(1b|1y)$, $P_A(1|2)=\sum_{b=1}^{w_y}P(1b|2y)$, $P_B(b|y)=\sum_{a=1}^{2}P(ab|xy)$. In terms of these variables we can express the remaining variables, i.e. the $P(a_1a_2b|y)$ for which at least one of the $a_i=2$ or $b=w_y$, as
\beq
P(11w_y|y)&=&P(11)-\sum_{b=1}^{w_y-1}P(11b|y)\label{dep1}\\
P(12b|y)&=&P(1b|1y)-P(11b|y)\\
P(21b|y)&=&P(1b|2y)-P(11b|y)\\
P(12w_y|y)&=&P_A(1|1)-P(11)-\sum_{b=1}^{w_y-1}P(1b|1y)+\sum_{b=1}^{w_y-1}P(11b|y)\\
P(21w_y|y)&=&P_A(1|2)-P(11)-\sum_{b=1}^{w_y-1}P(1b|2y)+\sum_{b=1}^{w_y-1}P(11b|y)\\
P(22b|y)&=&P_B(b|y)-P(1b|1y)-P(1b|2y)+P(11b|y)\\
P(22w_y|y)&=&1-P_A(1|1)-P_A(1|2)-\sum_{b=1}^{w_y-1}P_B(b|y)\\
&&+P(11)+\sum_{b=1}^{w_y-1}P(1b|1y)+\sum_{b=1}^{w_y-1}P(1b|2y)-\sum_{b=1}^{w_y-1}P(11b|y)\label{depf}\,.
\eeq
One can also simply verify directly that the above set of equality constraints are consistent with (\ref{eq1}), (\ref{eq2}), (\ref{eq3}), (\ref{coupling}) and that these later equalities have all been taken into account, i.e., there is no linear dependency left on the variables appearing on the right-hand side of Eqs. (\ref{dep1})-(\ref{depf}).

Now that we have taken into account all equality constraints, the only constraints that remain are the inequalities $P(a_1a_2b|y)\geq 0$. In terms of our set of independent variables, they simply correspond to $P(11b|y)\geq 0$ for $b<w_y$, together with the condition that all the right-hand side of Eqs. (\ref{dep1})-(\ref{depf}) should be positive. To write down explicitly these inequalities, let us first introduce the following  change of variable that will simplify the subsequent analysis
\be
S(b|y)=\sum_{b'=1}^{b} P(11b'|y)\,.
\ee
This implies $P(111|y)=S(1|y)$ and $P(11b|y)=S(b|y)-S(b{-}1|y)$ for $b>1$. The inequalities $P(a_1a_2b|y)\geq 0$ can then be written as
\begin{align}
S(1|y)&\geq 0\label{or1}\\
S(1|y)&\geq -P_B(1|y)+P(11|1y)+P(11|2y)\\
S(1|y)&\leq P(11|1y)\\
S(1|y)&\leq P(11|2y)\label{or2}\,,
\end{align}
\begin{align}
S(b{-}1|y)\leq& S(b|y)\label{or3}&(b=2,\ldots,w_y-1)\\
S(b{-}1|y)\leq& S(b|y)+P_B(b|y)-P(1b|1y)-P(1b|2y)&(b=2,\ldots,w_y-1)\\
S(b{-}1|y)\geq&S(b|y)-P(1b|1y)&(b=2,\ldots,w_y-1)\\
S(b{-}1|y)\geq&S(b|y)-P(1b|2y)\label{or4}&(b=2,\ldots,w_y-1)\,,
\end{align}
and
\begin{align}
S(w_y-1|y)&\leq P(11)\label{or5}\\
S(w_y-1|y)&\leq  1-P_A(1|1)-P_A(1|2)-\sum_{b=1}^{w_y-1}P_B(b|y)+P(11)\nonumber\\
&+\sum_{b=1}^{w_y-1}P(1b|1y)+\sum_{b=1}^{w_y-1}P(1b|2y)\\
S(w_y-1|y)&\geq-P_A(1|1)+P(11)+\sum_{b=1}^{w_y-1}P(1b|1y)\\
S(w_y-1|y)&\geq -P_A(1|2)+P(11)+\sum_{b=1}^{w_y-1}P(1b|2y)\label{or6}\,.
\end{align}
The first four inequalities correspond to the conditions $P(111|y)\geq 0$, $P(221|y)\geq 0$, $P(121|y)\geq 0$, $P(211|y)\geq 0$,  the four following series of inequalities to $P(11b|y)\geq 0$, $P(22b|y)\geq 0$, $P(12b|y)\geq 0$, $P(21b|y)\geq 0$ with $b=2,\ldots,w_y-1$, and finally the last four inequalities to  $P(11w_y|y)\geq 0$, $P(22w_y|y)\geq 0$, $P(12w_y|y)\geq 0$, $P(21w_y|y)\geq 0$.

We can now use use the Fourier-Motzkin method on the above system of inequalities to eliminate successively all the $S(b|y)$. The process consists in combining, for each $S(b|y)$, the inequalities of the form $S(b|y)\leq C$ with those of the form $S(b|y)\geq D$ to obtain the condition $C\geq D$ independent of $S(b|y)$. To do this, let us introduce a further notation. Let $G_b$ be a subset of $\{1,\ldots,b\}$, possibly empty, and $\overline{G_b}=\{1,\ldots,b\}\setminus G_b$ be the complementary set and write
\be 
P(G_b|y)=\sum_{b'\in G_b}P(b'|y)\,,\qquad P(1G_b|xy)=\sum_{b'\in G_b}P(1b'|xy)\,.
\ee
We now apply the Fourier-Motzkin elimination. Suppose first that  at some step in the Fourier-Motzkin iteration, all the inequalites involving $S(b|y)$ are of the following form
\beq 
\label{s1}S(b|y)&\geq&-P(G_b|y)+P(1G_b|1y)+P(1G_b|2y)\qquad \forall\, G_b\subseteq \{1,\ldots,b\}\\
\label{s2}S(b|y)&\geq& S(b+1|y)-P(1(b+1)|1y)\\
\label{s3}S(b|y)&\geq& S(b+1|y)-P(1(b+1)|2y)\\
\label{s4}S(b|y)&\leq& P(1G_b|1y)+P(1\bar G_b|2y) \qquad\forall\, G_b\subseteq \{1,\ldots,b\}\\
\label{s5}S(b|y)&\leq& S(b+1|y)\\
\label{s6}S(b|y)&\leq& S(b+1|y)+P(b+1|y)-P(1(b+1)|1y)-P(1(b+1)|2y)\,.
\eeq
We show here below that if we eliminate $S(b|y)$, we remain with a set of inequalities that are of the above form but with $b$ replaced by $b+1$. Since initially all inequalities involving $S(1|y)$ are of the above form, by sequentially removing the variables $S(1|y)$, $S(2|y),\ldots,$ one thus get a closed iterative process that finishes at the $(w_y-2$)th step, where we are left with inequalities involving only $S(w_y-1|y)$.

Let us now verify the inductive property mentioned above. Eliminating $S(b|y)$ in (\ref{s1})-(\ref{s6}) by combining $S(b|y)\geq X$ and $S(b|y)\leq Y$ to get $X\leq Y$, we are left with the following inequalities 
\beq
\label{sp1}S(b+1|y)&\geq& -P(G_{b+1}|y)+P(1G_{b+1}|1y)+P(1G_{b+1}|2y)\quad \forall\, G_{b+1}\subseteq \{1,\ldots,b+1\}\\
\label{sp2}S(b+1|y)&\leq& P(1G_{b+1}|1y)+P(1\bar G_{b+1}|2y) \quad \forall\, G_{b+1}\subseteq \{1,\ldots,b+1\}
\eeq
Inequality (\ref{sp1}) follows by combining (\ref{s1}) with (\ref{s5}) and with (\ref{s6}).
Inequality (\ref{sp2}) follows by combining (\ref{s4}) with (\ref{s2}) and with (\ref{s3}).
Other combination such as combining (\ref{s1}) and (\ref{s4}) yields inequalities of the form
\be
P(G_b|y)-P(1(G_b\setminus G_{b'})|1y)-P(1(G_b\cap G_{b'})|2y)+P(1(G_{b'}\setminus G_b)|1y)+P(1(\bar G_{b'}\setminus G_b)|2y)\geq 0\,.
\ee
Such inequalities only involve the probabilities $P(ab|xy)$. It can be checked that they are redundant with the positivity inequalities $P(ab|xy)\geq 0$ and we thus do not need to keep track of them.

The inequalities (\ref{sp1}) and (\ref{sp2}), on the other hand, involve the variables $S(b+1|y)$. These inequalities, together with the original inequalities (\ref{or1})-(\ref{or6}) for $S(b+1|y)$ are of the form (\ref{s1})-(\ref{s6}). This establishes the inductive nature of our iterative process. 
After the $(w_y-2)$th step, we are left with the inequalities (\ref{or5})-(\ref{or6}) and the inequalities (\ref{sp1})-(\ref{sp2}) for $S(w_y-1|y)$. Eliminating this last variable yields
\beq
P(11)&\geq & -P_B(G_{w_y-1}|y)+P(1G_{w_y-1}|1y)+P(1G_{w_y-1}|2y)\\
P(11)&\geq & -1+P_A(1|1)+P_A(1|2)+P(G_{w_y-1}|y)-P(1G_{w_y-1}|1y)-P(1G_{w_y-1}|2y)\\
P(11)&\leq & P_A(1|1)-P(1G_{w_y-1}|1y)+P(1G_{w_y-1}|2y)\\
P(11)&\leq &P_A(1|2)+P(1G_{w_y-1}|1y)-P(1G_{w_y-1}|2y)\,,
\eeq
for any possible subsets $G_{w_y-1}$ of $\{1,\ldots,w_y-1\}$.
It now remains to eliminate the variable $P(11)$. Note that if there is only one choice of measurement on the second system, all Bell inequalities are trivial inequalities arising from the positivity constraints. This means that combining two above inequalities that involve the same value $y$ gives rise to trivial inequalities. This fact can also be directly verified.  The only non-trivial inequalities are thus obtained when combining two of the above inequalities  with two different values $y$ and $y'$. There are thus four types of possibilities that lead to 
\beq
1\geq P_A(1|1)+P_B(G_y|y)-P(1G_y|1y)-P(1G_y|2y)-P(1G'_{y'}|1y')+P(1G'_{y'}|2y')\geq 0\\
1\geq P_A(1|2)+P_B(G_y|y)-P(1G_y|1y)-P(1G_y|2y)+P(1G'_{y'}|1y')-P(1G'_{y'}|2y')\geq 0
\eeq
for any $y$, $y'$ and any non-empty subsets $G_y\subseteq\{1,\ldots,w_y-1\}$ and $G'_{y'}\subseteq\{1,\ldots,w_{y'}-1\}$. These inequalities correspond to all possible liftings of the two-input two-output CHSH inequalities
\be
1\geq P_A(1|1)+P_B(1|1)-P(11|11)-P(11|21)-P(11|12)+P(11|22)\geq 0
\ee
written in the CH form \cite{CH}. Such lifting are obtained by associating in the above inequality, the measurement choices $y$ and $y'$ with the choices $1$ or $2$ and by grouping subset $G_y$ and $G'_{y'}$ of the corresponding outcomes in an effective ``$1$" outcome, and all the remaining outcomes in an effective ``$2$" outcome. This achieves the proof of Theorem 1.

\section*{Proof of Theorem 4}
We remind \cite{bellrmp} that to show that (\ref{ineq2}) is a facet of the local polytope $\mathcal{L}_n$, it is sufficient to show that $i)$ it is satisfied by all local deterministic points (i.e. all vertices of $\mathcal{L}_n$) 
\be 
P(ab|xy)=\delta_{a\alpha_x}\delta_{b\beta_y}\,,
\ee
and $ii)$ that there are $\dim \mathcal{L}_n$ affinely deterministic points that saturate it. 

Let us start by showing $i)$. There are two cases to consider depending on the value of $\alpha_{1}$. If $\alpha_1=1$, then the Bell expression on the left-hand side of (\ref{ineq2}) is equal to $1-\sum_{k=1}^{n} P(k1|2k)$, which is clearly positive for any $\alpha_2=k'$. If $\alpha_1=2$, the Bell expression is equal to $\sum_{k=1}^{n-1}P_B(1|k)-\sum_{k=1}^{n-1} P(k1|2k)+\sum_{k=1}^{n-1}P(k1|2n)$, which is also clearly positive since $P_B(1|k)\geq P(k1|2k)$ for any $k$.

Let us now show $ii)$. First note that the dimension of $\mathcal{L}_n$ is $n(n+2)$ \cite{pironio05}. Consider now the following $n(n+2)$ local deterministic points grouped in four subsets:
\begin{align}
(1)\quad&\alpha_{1}=1,\,\alpha_{2}=k,\,\beta_{l}=1\text{ for all }l &(k=1,\ldots,n)\nonumber\\
(2)\quad&\alpha_{1}=1\,\alpha_{2}=k,\,\beta_{l}=2,\,\beta_{j}=1 \text{ for all }j\neq l&(k,l=1,\ldots,n
\text{ and }k\neq l)\nonumber\\
(3)\quad&\alpha_{1}=2,\,\alpha_{2}=k,\,\beta_{l}=2\text{ for all }l &(k=1,\ldots,n)\nonumber\\
(4)\quad&\alpha_{1}=2,\,\alpha_{2}=k,\,\beta_{k}=1,\,\beta_{l}=2 \text{ for all }l\neq k&(k=1,\ldots,n)\nonumber
\end{align}
It is easily verified that they all saturate the inequality (\ref{ineq2}). Furthermore, they are all affinely independent. Indeed, first note that the vertices from the first set are affinely independent. If the vertices of the second, third and fourth set are then successively added, we obtain a resulting set where all points are affinely independent because each newly introduced vertex has a nonzero component which is equal to zero for all the previously introduced vertices. For the points in the second set, this component is $P(k2|2l)$, for the third, $P(k2|2k)$, and for the fourth, it is $P(21|1k)$.

\paragraph{Acknowledgments.}
We acknowledge financial support from the European Union under the projects DIQIP and QALGO, from 
the F.R.S.-FNRS under the project DIQIP, and from the Brussels-Capital Region through a BB2B grant.

%\bibliographystyle{unsrt}
%\bibliography{all_chsh}

\begin{thebibliography}{10}

\bibitem{bell64}
J.~S. Bell.
\newblock {\em Physics}, 1:195, 1964.

\bibitem{bellrmp}
N.~Brunner, D.~Cavalcanti, S.~Pironio, V.~Scarani, and S.~Wehner.
\newblock {\em Rev. Mod. Phys}, 86:419, 2014.

\bibitem{chsh69}
J.~F. Clauser, M.~A. Horne, A.~Shimony, and R.~A. Holt.
\newblock {\em Phys. Rev. Lett.}, 23:880, 1969.

\bibitem{pironio05}
S.~Pironio.
\newblock {\em J. Math. Phys.}, 46:062112, 2005.

\bibitem{sli03}
C.~Sliwa.
\newblock {\em Phys. Lett.~A}, 317:165, 2003.

\bibitem{CG}
D.~Collins and N.~Gisin.
\newblock {\em J.~Phys. A: Math. Theor.}, 37:1775, 2004.

\bibitem{porta}
T.~Christof and A.~Lobel.
\newblock porta, 1997.
\newblock \url{http://typo.zib.de/opt-long_projects/Software/Porta/}.

\bibitem{boxes}
J.~Barrett, N.~Linden, S.~Massar, S.~Pironio, S.~Popescu, and D.~Roberts.
\newblock {\em Phys. Rev.~A}, 71:022101, 2005.

\bibitem{froissard81}
M.~Froissard.
\newblock {\em Nuovo Cimento B}, 64:241, 1981.

\bibitem{cglmp}
D.~Collins, N.~Gisin, N.~Linden, S.~Massar, and S.~Popescu.
\newblock {\em Phys. Rev. Lett.}, 88:040404, 2002.

\bibitem{cdd}
K.~Fukuda.
\newblock cdd, 2003.
\newblock \url{http://www.ifor.math.ethz.ch/~fukuda/cdd_home/cdd.html}.

\bibitem{dimwit}
N.~Brunner, S.~Pironio, A.~Ac\'in, N.~Gisin, A.A. M\'ethot, and V.~Scarani.
\newblock {\em Phys. Rev. Lett.}, 100:210503, 2008.

\bibitem{Schrijver89}
A.~Schrijver.
\newblock {\em Theory of linear and integer programming}.
\newblock Wiley-Interscience series in discrete mathematics. John Wiley \&
  Sons, 1989.

\bibitem{Fine82}
A.~Fine.
\newblock {\em Phys. Rev. Lett.}, 48:291, 1982.

\bibitem{CH}
J.~F. Clauser and M.~A. Horne.
\newblock {\em Phys. Rev. D}, 10:526, 1974.

\end{thebibliography}

\end{document}